\newcommand \ba{\mathbf{a}}
\newcommand \bef{\mathbf{f}}
\newcommand \bi{\mathbf{i}}
\newcommand \bs{\mathbf{s}}
\newcommand \bu{\mathbf{u}}
\newcommand \bv{\mathbf{v}}
\newcommand \bw{\mathbf{w}}
\newcommand \by{\mathbf{y}}
\newcommand \bp{\mathbf{p}}
\newcommand \bq{\mathbf{q}}
\newcommand \bA{\mathbf{A}}
\newcommand \1{\mathbf{1}}
\newcommand \bxi{\boldsymbol{\xi}}
\newcommand \bI{\mathbf{I}}
\newcommand \bJ{\mathbf{J}}
\newcommand \bX{\mathbf{X}}
\newcommand \bY{\mathbf{Y}}
\newcommand \bZ{\mathbf{Z}}
\newcommand \bR{\mathbf{R}}
\newcommand \mcB{\mathcal{B}}
\newcommand \mcC{\mathcal{C}}
\newcommand \mcD{\mathcal{D}}
\newcommand \mcG{\mathcal{G}}
\newcommand \mcL{\mathcal{L}}
\newcommand \mcN{\mathcal{N}}
\newcommand \mcE{\mathcal{E}}
\newcommand \0{\boldsymbol{0}}
\newcommand \bphi{\boldsymbol{\phi}}
\def\realnumbers{\mathbb{R}}
\def\complexnumbers{\mathbb{C}}
\DeclareMathOperator{\diag}{diag}
\newtheorem{proposition}{Proposition}
\newtheorem{lemma}{Lemma}
\theoremstyle{definition}
\newtheorem{remark}{Remark}
\newcommand{\longthmtitle}[1]{\mbox{}{\textit{(#1):}}}
\newcommand{\oprocendsymbol}{\hbox{$\bullet$}}
\newcommand{\oprocend}{\relax\ifmmode\else\unskip\hfill\fi\oprocendsymbol}
\newcommand{\guido}[1]{\ifthenelse{\boolean{showcomments}}
	{ \textcolor{red}{(Guido says:  #1)}}{}}
\newcommand{\manish}[1]{\ifthenelse{\boolean{showcomments}}
	{ \textcolor{blue}{(Manish says:  #1)}}{}}
\title{\LARGE \bf Learning Local Volt/Var Controllers Towards Efficient Network Operation with Stability Guarantees}
\author{Guido Cavraro \and Zhenyi Yuan \and Manish K. Singh \and Jorge Cort\'es
\thanks{This work was authored by the National Renewable Energy Laboratory, operated by Alliance for Sustainable Energy, LLC, for the U.S. Department of Energy (DOE) under Contract No. DE-AC36-08GO28308. This work was supported by the Laboratory Directed Research and Development (LDRD) Program at NREL. The views expressed in the article do not necessarily represent the views of the DOE or the U.S. Government. The U.S. Government retains and the publisher, by accepting the article for publication, acknowledges that the U.S. Government retains a nonexclusive, paid-up, irrevocable, worldwide license to publish or reproduce the published form of this work, or allow others to do so, for U.S. Government purposes.}
\thanks{G. Cavraro is with the National Renewable Energy Laboratory. {\tt guido.cavraro@nrel.gov}. Z. Yuan and J. Cort\'es are with the Department of Mechanical and Aerospace Engineering, University of California, San Diego. {\tt \{z7yuan,cortes\}@ucsd.edu}. M. K. Singh is with the Department of Electrical and Computer Engineering, University of Minnesota. {\tt msingh@umn.edu.}}
}
\begin{document}
\maketitle

%

%
	
\begin{abstract} 
This paper considers the problem of voltage regulation in distribution networks. The primary motivation is to keep voltages within preassigned operating limits by commanding the reactive power output of distributed energy resources (DERs) deployed in the grid.
We develop a framework for developing local Volt/Var control that comprises two main steps. In the first, by exploiting historical data and for each DER, we learn a function representing the desirable equilibrium points for the power network. These points approximate solutions of an Optimal Power Flow (OPF) problem. In the second, we propose a control scheme for steering the network towards these favorable configurations. Theoretical conditions are derived to formally guarantee the stability of the developed control scheme, and numerical simulations illustrate the effectiveness of the proposed approach.
\end{abstract}

\section{Introduction}
%
%

The deployment of a massive number of DERs in distribution networks (DNs) is dramatically changing the electric power grid. Primarily driven by sustainability and economic incentives, DERs present additional opportunities, including voltage profile improvements and line-loss reduction. At the same time, the DERs' uncoordinated power injections or sudden generation changes could pose challenges to system stability and power quality.
To facilitate their integration in to power grids, DERs are being provided with sensing and computational capabilities and hence are becoming \emph{smart agents}. Further, they can exploit the flexibility of their power electronic interfaces to control the reactive power injection/withdrawal. Motivated by these observations, this paper aims to develop reactive power controllers to regulate voltages, also known as \emph{Volt/Var} controller, for DNs.

\emph{Literature Review:} Most control methods developed for DNs in recent years fit in the categories of \emph{distributed} or \emph{local} control strategies. In the first, DERs are allowed to communicate and share information in a communication network;
in the second, generators use only locally available information.
Distributed algorithms steer the network toward solutions of OPF problems, in which the power generation cost, the line losses, or the deviations from the nominal voltage are optimized~\cite{DallAnese_2018_TSG,Bolognani_2019_TCONES}.
Nevertheless, distributed strategies usually have precise and strict requirements on the communication network. For instance, in many works, each generator is required to share information with all its neighbors in the power system.
In local schemes, power injections are adjusted based on measurements taken at the point of connection of the power inverter to the grid~\cite{Turitsyn_2011_IEEE,Zhou_2021_TAC,Cavraro_2015_TPS}. The goal is typically to keep the voltages within safe limits.
Though simpler than distributed strategies, local schemes have intrinsic performance limitations, e.g., they might fail to regulate voltages even if the overall generation resources are enough~\cite{Bolognani_2019_TCONES}. 

To enhance the performance of local schemes and to reduce the gap with distributed and/or optimal controllers, recent efforts have devised customized control rules using data-driven and machine learning methods. A data set for learning control functions can be created by solving OPF problems using historical consumption and generation data, e.g., smart meter data.
Indeed, learning techniques have also been used to obtain fast (approximate) solutions to OPF problems. Deep neural networks (DNNs) have been employed to predict OPF solutions that are converted to a physically implementable schedule upon projection using a power flow solver~\cite{Zamzam_SGC_2020}. A graph neural network leveraging the connectivity of the power system is trained to infer AC-OPF solutions in~\cite{Owerko_2020_ICASSP}.
In~\cite{Singh_2020_SGC}, \cite{L2O} a DNN is trained to fit not only OPF minimizers, but also their sensitivities with respect to the problem inputs.
Piecewise linear control functions are designed in~\cite{Karagiannopoulos_2019_TSG} given the number of break points. 
The Authors in~\cite{cui_2021_arxiv} consider an OPF problem whose objective function penalizes the voltage deviations from the nominal one and the control effort. They derive stable local controllers that steer the system toward an approximated solution.
Continuous time local reactive power control schemes are designed in~\cite{shi_2021_arxiv} to solve an OPF problem with voltage constraints. However, reactive power capacity limits, which are critical when dealing with small-size generators, are not imposed.

\emph{Statement of Contributions:} In this work, we devise a framework for designing local Volt/Var scheme whose goal is to not only regulate voltages but also act as local surrogates of OPF solvers.
We advocate for a two-stage strategy. First, for each agent, a function, referred to as an \emph{equilibrium function}, providing OPF solution surrogates is learned from historical data. Precisely, such a function receives as input the local voltage and gives as an output an approximation of the optimal reactive power set point. Second, we devise a control algorithm whose equilibrium points (i) are asymptotically stable and (ii) are exactly the OPF approximated solutions provided by the equilibrium function.

The paper is structured as follows.
In Section~\ref{sec:modeling}, we model a power distribution network and define the OPF problem of interest.
Section~\ref{sec:approach} describes the aforesaid two-stage approach and Sections~\ref{sec:learn_eq_function} and~\ref{sec:control_rule} detail the corresponding technical results for each stage,respectively.
Finally, numerical tests are reported in Section~\ref{sec:tests}, and conclusions are drawn in Section~\ref{sec:conc}.

\emph{Notation:} Lower- (upper-) case boldface letters denote column vectors (matrices). Given a vector $\ba$, its $n$-th entry is denoted as $a_n$. Sets are represented by calligraphic symbols. The symbol $^\top$ stands for transposition, and inequalities are understood element-wise. The vector of all ones is denoted by $\1$; the corresponding dimension should be clear from the context. The operator $|\cdot|$ yields: the absolute value for real-valued arguments; the magnitude for complex-valued arguments; and the cardinality when the argument is a set. 
The set of complex numbers, of real numbers, and of nonnegative real numbers are denoted as $\complexnumbers,\realnumbers$, and $\realnumbers_{\geq 0}$, respectively. Operators $\Re(\cdot)$ and $\Im(\cdot)$ extract the real and imaginary parts of a complex-valued argument, respectively, and act entry-wise. Given a matrix $\bA$, an eigenvalue $\lambda$ with its associated eigenvector $\bxi$ forms the eigenpair $(\lambda, \bxi)$.
The norm of $\bA$ is defined by $\|\bA\| = \sqrt{\lambda_{\max}(\bA^\top \bA)}$, where $\lambda_{\max}(\bA^\top \bA)$ is the largest eigenvalue of $\bA^\top \bA$. This definition coincides with the 2-norm of a matrix.  The graph of a function $\phi: \realnumbers \rightarrow \realnumbers$ is the set of all points of the form $(x, \phi(x))$, with $x \in \realnumbers$.


\section{Power Distribution Grid Model}\label{sec:modeling}

Consider a power distribution network with $N+1$ buses  modeled by an undirected graph $\mcG = (\mcN, \mcE)$, whose nodes $\mcN = \{0, 1, \dots, N\}$ are associated with the electrical buses and whose edges represent the electric lines. We label the substation node as 0, and we assume that behaves as an ideal voltage generator imposing the nominal voltage of 1 p.u.
Define the following quantities:
\begin{itemize}
\item $u_n\in \complexnumbers$ is the voltage at bus $n \in \mcN$.
\item $v_n\in \realnumbers$ is the voltage magnitude at bus $n \in \mcN$.
\item $i_n\in \complexnumbers$ is the current injected at bus $n \in \mcN$.
\item $s_n=p_n+iq_n \in \complexnumbers$ is the nodal complex power at bus $n\in \mcN$, where $p_n,q_n\in \realnumbers$ are the active and  the reactive powers. Powers will take positive (negative) values, i.e., $p_n, q_n \geq 0$ ($p_n, q_n \leq 0$), when they are \emph{injected into} (\emph{absorbed from}) the grid.
\item $y_{(v,w)}  \in \complexnumbers$ is the admittance of line $(v,w) \in \mcE$.
\end{itemize}
Vectors $\bu,\bi,\bs\in \complexnumbers^n$ collect the complex voltages,  currents, and complex powers of buses $1,2,\ldots,n$; and the vectors $ \bv,\bp, \bq\in\realnumbers^n$ collect the voltage magnitudes, and their active and reactive power injections.
Denote by $z_e$
and by $y_e = z_e^{-1}$ the impedance and the admittance of line $e = (m,n) \in \mcE$.
The network bus admittance matrix $\bY \in \complexnumbers^{(N+1)\times(N+1)}$ is a symmetric matrix that can be expressed as $\bY = \bY_L + \diag(\by_T)$, where  
\begin{equation}
(\bY_L)_{mn} = \begin{cases}
- y_{(m,n)} & \text{ if } (m,n) \in \mcE, m \neq n, \\
\sum_{m \neq n} y_{(m,n)} & \text{ if }m = n,
\end{cases}
\label{eq:busAdmMatrixDef}
\end{equation}
and the vector $\by_T$ collects the shunt components of each line. The matrix $\bY_L$ is a complex Laplacian matrix, and hence satisfies $ \bY_L \1 = \0$.
We partition the bus admittance matrix separating the components associated with the substation and the ones associated with the other nodes, obtaining
$$\bY = \begin{bmatrix}
y_{0}&\by_0^\top \\
\by_0& \tilde \bY
\end{bmatrix}$$
with $y_{0} \in \complexnumbers, \by_0 \in \complexnumbers^{N}, \tilde \bY \in \complexnumbers^{N \times N}$.
If the network is connected, $\tilde \bY$ is invertible~\cite{Kettner_2018_TPS}. Let $\tilde \bZ := \tilde \bY^{-1}$, $\tilde \bR:=\Re\{\tilde \bZ\}$, and $\tilde \bX:=\Im\{\tilde \bZ\} \in \complexnumbers^{N \times N}$.
The power flow equation can be written as
\begin{subequations}
\begin{align}
& \bu = \tilde \bZ \bi+ \hat \bu , \label{eq:nodevoltage}\\
& u_0 = 1, \label{eq:PCCidealvoltgen}\\
& u_n \bar i_n = p_n + j q_n, \qquad n\neq 0 , \label{eq:nodeconstpwr}
\end{align}
\label{eq:nonlinvolt}
\end{subequations}
where $\bar i_n$ denotes the complex conjugate of $i_n$ and $\hat \bu := \tilde \bZ \by_0$. Equation~\eqref{eq:nodevoltage} represents the Kirchoff equations and provides the relation between voltages and currents. Finally, equation~\eqref{eq:nodeconstpwr} comes from the fact that all the nodes, except the substation, are modeled to be constant power buses.
Voltage magnitudes are nonlinear functions of the nodal power injections; however, using a first-order Taylor expansion, the power flow equation can be linearized to obtain
\begin{equation}
\bv = \tilde \bR \bp + \tilde \bX \bq + |\hat \bu|,
\label{eq:v=Rp+Xq}
\end{equation}
%
and to express the power losses as a scalar quadratic function of the power injections~\cite{Cavraro_2022_TCNS}
\begin{equation}
l = \bq^\top \tilde \bR \bq + \bp^\top \tilde \bR \bp.    
\label{eq:loss}
\end{equation}
%
%
%

Assume a subset $\mcC \subseteq \mcN$ of buses host DERs, with $|\mcC| = C$.
%
%
The remaining nodes constitute the set $\mcL=\mcN\setminus\mcC$. Every DER corresponds to a smart agent that  measures its voltage magnitude and performs reactive power compensation.
It is convenient to partition the reactive powers and voltage magnitudes by grouping together the nodes belonging to the same set
\begin{equation*}
\bq = \begin{bmatrix}
\bq_C^\top & \bq_L^\top
\end{bmatrix}^\top, 
\bv = \begin{bmatrix}
\bv_C^\top & \bv_L^\top
\end{bmatrix}^\top .
\label{eq:partitionV}
\end{equation*}
Also, the matrices $\tilde \bR$ and $\tilde \bX$ can be decomposed according to the former partition, yielding
\begin{equation}
\tilde \bR = \begin{bmatrix}

\bR & \bR_L \\
\bR_L^\top & \bR_{LL}
\end{bmatrix}, \quad
\tilde \bX = \begin{bmatrix}
\bX & \bX_L \\
\bX_L^\top & \bX_{LL}
\end{bmatrix}.
\label{eq:partitionRX}
\end{equation}
with $\bR$ and $\bX$ being positive-definite matrices.
Fixing the active and reactive loads along with the active solar generation, from \eqref{eq:v=Rp+Xq} and \eqref{eq:loss}, the voltage magnitudes and power losses become functions exclusively of $\bq_C$:
\begin{subequations}\label{eq:vl-approx}
    \begin{align}
    &\bv(\bq_C) = 
    \begin{bmatrix}
    \bX \\
    \bX_L^\top
    \end{bmatrix} \bq_C + \hat \bv \label{eq:appx_volt}\\
    & l(\bq_C) = \bq_C^\top \bR \bq_C + \bq_C^\top \bw + \hat l,  \label{eq:appx_loss}
\end{align}
\end{subequations}
where the following definitions are used 
\begin{subequations}\label{eq:defs}
\begin{align}
    \hat \bv &:= 
\begin{bmatrix}
\bX_L\\
\bX_{LL}
\end{bmatrix} \bq_L + \tilde\bR \bp + |\hat \bu|,\label{eq:barv-def}\\
\bw &:=2\bR_L\bq_L,\label{eq:w-def}\\
\hat l &:=\bq_L^\top\bR_{LL}\bq_L + \bp^\top \tilde \bR \bp.\label{eq:barl-def}
\end{align}
\end{subequations}

\section{Overview of the Proposed Approach for DER Control}\label{sec:approach}
%
%
This section proposes a two-stage approach to optimally use the flexibility in the DERs' reactive powers while ensuring the stable operation of the distribution network. In the first stage, a centralized OPF instance is formulated to determine the optimal DER reactive power set points given the noncontrollable (re)active power injections across the network. Altough the considered OPF formulation is convex, solving numerous instances of it for real-time operation might be computationally challenging. Further, the necessity for (re)active power information from across the network introduces communication challenges. Towards alleviating the aforementioned concerns,
%
%
we train a fleet of neural networks (one per DER)
%
%
to (approximately) predict the optimal set points, given merely local nodal voltages as inputs. In the second stage, we develop a control scheme to steer the DERs' reactive power injections to the set points obtained from the neural network outputs while formally guaranteeing stability.
%
%

A typical OPF formulation for the DERs' dispatch would solve for an optimal $\bq_C^*$, given the tuple $(\bp,\bq_L)$, such that the stipulated voltage limits and DER reactive-power capacity limits are satisfied, and a certain network criterion is optimized. Although arbitrary cost functions could be considered, here we consider an OPF problem that minimizes line losses. Such an OPF can be posed as
\begin{subequations}
	\begin{align}
	\bq_C^*(\bp,\bq_L):=\arg\min_{\bq_C}\  & ~l(\bq_C) \tag{P1}\label{eq:opf}\\
	\mathrm{s.t.}\  &\eqref{eq:vl-approx}-\eqref{eq:defs}, \text{and}\\
	&~\bv_{\min} \leq \bv(\bq_C) \leq \bv_{\max}, \label{eq:opf:c1}\\
	                &~\bq_{\min} \leq \bq_C \leq \bq_{\max}, \label{eq:opf:c2}
	\end{align} 
\end{subequations}
where $\bv_{\min}, \bv_{\max} \in \realnumbers^N$ are the desired voltage lower and upper limits on \emph{all} the network buses, and $\bq_{\min}, \bq_{\max} \in \realnumbers^C$ are the minimum and the maximum DERs' reactive power injections.
We denote the set of the feasible reactive power injections for the DER at node $n$
%
%
as $\mcB_n = \{q_n: q_n \in [q_{\min,n}, q_{\max,n}]\}$.
Problem~\eqref{eq:opf} is strictly convex, cf. \eqref{eq:appx_volt}--\eqref{eq:appx_loss}, and admits a unique minimizer.
Moreover, the minimizer is a function of the uncontrolled variables $\bp$ and $\bq_L$, which appear implicitly in the objective function and the constraint~\eqref{eq:opf:c1} via \eqref{eq:defs}. 

In principle, solving \eqref{eq:opf} given a tuple $(\bp,\bq_L)$ is tractable, thanks to the problem convexity. However, due to thehigh penetration of renewable generation, DNs are witnessing increased variability that requires solving numerous instances of \eqref{eq:opf} with a limited time and budget. To tackle this challenge, several neural network-based approaches have been put forth to predict approximates of $\bq_C^*$ with $(\bp,\bq_L)$  presented as the neural network inputs~\cite{Singh_2020_SGC}. Once trained, the time required for neural network inference when presented with a new input is minimal.  
%
%
While this alleviates the computational burden of solving OPFs, the need for the network-wide quantities $(\bp,\bq_L)$ imposes a significant communication burden for implementation. To simultaneously reduce  the computational and communication complexities, a common approach is to deploy solutions based on local control rules, whose performance in terms of optimality is generally lacking. For DER reactive power dispatches to achieve voltage regulation, such rules~\cite{IEEE1547} are often presented as piecewise linear functions of local voltages. Designing these rules to harness efficient DN operation has recently garnered tremendous interest~\cite{Zhou_2021_TAC,cui_2021_arxiv,Singh_GM_22}.

Inspired by the recently reported success of neural-network-based surrogates for OPF and ongoing efforts towards designing local control rules for DERs, this work proposes a two-stage approach. In the first stage, termed the \emph{learning stage}, we use historical data to learn functions that map voltages to (approximate) solutions of the OPF problem~\eqref{eq:opf}. Specifically, for each agent $n \in \mcC$, we aim to learn a function $\phi_n$ of the local voltage $v_n$ as
\begin{equation}
\phi_n:\realnumbers \rightarrow \mcB_n, \quad v_n \mapsto \phi_n(v_n),
\label{eq:loc_surr}
\end{equation}
with $\phi_n(v_n)$ providing the optimal reactive power surrogates. %
%
Then, we would like the generators to inject reactive power set points $\bq_C$ such that, for each $n\in \mcC$ we have
\begin{equation}
  q_n = \phi_n(v_n),
  \label{eq:fixed_point}
\end{equation}
where the voltage $v_n$ in turn depends on the reactive power injection $\bq_C$ as per~\eqref{eq:appx_volt}. Hence, the graph of the function $\phi_n$, namely, points of the form $(v_n,\phi_n(v_n))$, consists of desirable network configurations that are surrogates of the solutions of~\eqref{eq:opf} and, for this reason, we term the the functions $\{\phi_n\}_{n \in \mcC}$ \emph{equilibrium functions}. The second stage, termed the \emph{control stage}, aims to design local control rules that steer the network to configurations satisfying~\eqref{eq:fixed_point} for each $n\in \mcC$.

\begin{remark}\longthmtitle{On the need for a control algorithm}
The outcome of the learning stage are functions that map local voltages to (approximated) optimal reactive power set points. Hence, one might ask why it is not enough just to apply those reactive power setpoints provided by the learning function. This is the approach taken in e.g.,~\cite{Zamzam_SGC_2020,Jalali_2020_TSG}. The main reason why not is because we are considering the case in which only a few power injections, i.e., the DERs, are controlled. Applying the OPF solution surrogates $\bq_C^\sharp = \bphi(\bv_C)$, computed using the voltages $\bv_C$, in general, could change the voltages to a new configuration $\bv_C(\bq^\sharp) \neq \bv_C$. That is, $(v_n(\bq_C), q^\sharp_n)$ belongs to graph of $\phi_n$, but $(v_n(\bq_C^\sharp), q^\sharp_n)$ does not. Hence the new configuration is not an approximated power flow solution. The control scheme we develop aims exactly at iteratively steering the systems toward configurations belonging to the graph of the equilibrium functions. \oprocend
\end{remark}


\section{Neural Network-based Surrogates for Equilibrium Functions}
\label{sec:learn_eq_function}
This section describes our approach to learn equilibrium functions for each agent in $\mcC$ that describe the solutions of~\eqref{eq:opf} as a function of the individual voltages. The labeled dataset required to accomplish the desired learning task is obtained as described next. Given that \eqref{eq:opf} takes $(\bp,\bq_L)$ as input, we first build a set $\{(\bp^k,\bq_L^k)\}_{k=1}^K$ of $K$ load-generation scenarios. One can  obtain the aforementioned scenarios via random sampling from assumed probability distributions, historical data, or from forecasted conditions for a look-ahead period. 
Next, the OPF~\eqref{eq:opf} is solved for the $K$ scenarios to obtain the corresponding minimizers $(\bv(\bq_C^*),\bq_C^*(\bp,\bq_L))$. The entries for these minimizers are then separated for each $n\in\mcC$ to obtain datasets of the form $\mcD_n=\{(v_{n,k}^*,q_{n,k}^*)\}_{k=1}^K$, where the parametric dependencies have been omitted for notational ease. Next, we seek to independently learn equilibrium functions, one per node in~$\mcC$, such that the elements of the respective sets $\mcD_n$ are close to the graphs of the learned functions; with proximity quantified in terms of the squared error. Specifically,
%
%
using the mean square error (MSE) metric, the learning task can be posed as 
\begin{align}\label{eq:learning_problem}
    \min_{\phi_n} \frac{1}{K}\sum_{k=1}^K |\phi_{n,k}(v_{n,k}^*) - q_{n,k}^*|^2.
\end{align}
%
%
In addition, we impose the following conditions on each $\phi_n$: it needs to be $\mathrm{C1)}$ differentiable, $\mathrm{C2)}$ nonincreasing, and $\mathrm{C3)}$ with range in $\mcB_n$. The motivation for these requirements will be clear later. Since we employ neural networks to construct the equilibrium functions, ensuring that $\mathrm{C1)}-\mathrm{C3)}$ are satisfied is facilitated by choosing activation functions such as sigmoids, tanh, and softsign. In the following, we train the equilibrium functions using a single layer neural network and, as activation functions, we choose
$$\sigma(x) = \frac{e^x - e^{-x}}{e^x + e^{-x}}.$$
The next result gives a parameterization for a function satisfying $\mathrm{C1)}-\mathrm{C3)}$ using a single hidden layer neural network.

\begin{lemma}\longthmtitle{Parameterization of neural network satisfying the desired requirements}\label{lem:monotone_NN}
Consider a neural network $\mathrm{NN}(x): \realnumbers \mapsto \realnumbers$ with one hidden layer of $H$ neurons, with output defined as
\begin{align}\label{eq:monotone_func}
    \mathrm{NN}(x) =& \sum_{h=1}^{H} w_h \sigma(x + b_h),
\end{align}
where, $\sigma(\cdot)$ is the tanh activation function and $(w_h,b_h)$ denote the weight and bias associated with the $h$-th neuron. If $w_h \leq 0$, for all $h$, then $\mathrm{NN}$ is continuous, differentiable, and nonincreasing. Further, if $\sum_{h=1}^H|w_h|\leq W$, then $\mathrm{NN}(x)\in [-W,W]$, for all $x \in \realnumbers$.
\end{lemma}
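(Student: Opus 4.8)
The plan is to establish the three asserted properties directly from elementary facts about the tanh activation $\sigma$, handling the two displayed conclusions of the lemma in turn. For continuity and differentiability, I would note that $\sigma(x)=\tanh(x)$ is infinitely differentiable on all of $\realnumbers$, since its denominator $e^x+e^{-x}$ never vanishes. Each map $x\mapsto\sigma(x+b_h)$ is therefore smooth, and multiplying by the scalar $w_h$ and summing over the finitely many neurons $h=1,\dots,H$ preserves smoothness. Hence $\mathrm{NN}$ is continuous and differentiable, which gives C1).

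For the nonincreasing property C2), I would differentiate \eqref{eq:monotone_func} term by term to obtain $\mathrm{NN}'(x)=\sum_{h=1}^{H} w_h\,\sigma'(x+b_h)$. The key ingredient is the standard identity $\sigma'(x)=1-\sigma(x)^2$, which is strictly positive because $|\sigma(x)|<1$ for every $x\in\realnumbers$. Under the hypothesis $w_h\leq 0$, each summand $w_h\,\sigma'(x+b_h)$ is then nonpositive, so $\mathrm{NN}'(x)\leq 0$ for all $x$, and $\mathrm{NN}$ is nonincreasing.

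For the range bound, I would invoke the uniform bound $|\sigma(t)|<1$ together with the triangle inequality, writing $|\mathrm{NN}(x)|\leq \sum_{h=1}^{H}|w_h|\,|\sigma(x+b_h)|\leq \sum_{h=1}^{H}|w_h|\leq W$, where the final step uses the assumption $\sum_h |w_h|\leq W$. This shows $\mathrm{NN}(x)\in[-W,W]$ for all $x$, which yields C3) once $W$ is chosen compatibly with the reactive-power limits defining $\mcB_n$.

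I do not anticipate a genuine obstacle, as every step reduces to a well-known property of $\tanh$: smoothness, the derivative identity giving strict positivity of $\sigma'$, and the sup-norm bound by $1$. The only minor point worth flagging is that the range estimate uses a weak inequality, so the endpoints $\pm W$ are approached as $x\to\pm\infty$ but need not be attained; this is harmless, since the stated conclusion only asserts containment in the closed interval $[-W,W]$.
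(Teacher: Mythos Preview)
Your proof is correct and matches the paper's argument for continuity, differentiability, and the nonincreasing property. For the range bound the paper takes a slightly different route: it uses the monotonicity just established to conclude that the supremum and infimum of $\mathrm{NN}$ are attained in the limits $x\to-\infty$ and $x\to\infty$, and then evaluates these limits via $\sigma\to\mp 1$ together with $w_h\leq 0$ to get $\pm\sum_h|w_h|$. Your triangle-inequality argument is more direct and does not rely on the sign hypothesis $w_h\leq 0$, so it establishes the bound under the weaker assumption $\sum_h|w_h|\leq W$ alone; either approach suffices here.
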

\begin{proof}
The continuity and differentiability of $\mathrm{NN}$ trivially stems from that of $\sigma$. To establish the nonincreasing property, we take the derivative to obtain
\begin{align*}
    \frac{d \mathrm{NN}(x)}{d x} = \sum_{h=1}^H w_h \frac{d \sigma(x+b_h)}{d x} \leq 0,
\end{align*}
where we use the fact that the derivative of tanh function is always positive and that $w_h\leq0$, for all $h$. Owing to the above non-increasing property, the supremum (infimum) of $\mathrm{NN}$ is attained for the limit $x\rightarrow -\infty$ ($x\rightarrow \infty$). Substituting $\lim_{x\rightarrow -\infty}\sigma(x)=-1$ in~\eqref{eq:monotone_func} provides $\lim_{x\rightarrow -\infty}\mathrm{NN}(x)=\sum_{h=1}^H|w_h|\leq W$, where $w_h\leq 0$ is used. Similarly, evaluating for the limiting case $x\rightarrow\infty$, one obtains $\mathrm{NN}(x)\in[-W,W]$, thus completing the proof.
\end{proof}

Lemma~\ref{lem:monotone_NN} means that we can find the desired equilibrium functions $\{\phi_n\}_{n \in \mcC}$ by training the parameters of neural networks defined by~\eqref{eq:monotone_func}. The requirement that the range of $\phi_n$ belongs to $\mcB_n$ is satisfied by selecting $W= \min \{|q_{\min,n}|, |q_{\max,n}|\}$.
%
%

\section{A Local Control Scheme to Reach Desirable Equilibria}\label{sec:control_rule}

In this section, we propose and analyze a local control scheme that aims to steer the system to configurations satisfying~\eqref{eq:fixed_point} and~\eqref{eq:appx_volt}.
For each $n \in \mcC$, consider the following reactive power update rule
\begin{equation}
    q_n(t+1) = q_n(t) + \epsilon (\phi_n(v_n(t)) - q_n(t)),
    \label{eq:bus_react_upd}
\end{equation}
%
%
where $v_n(t)$ is determined by \eqref{eq:appx_volt}, and $\epsilon$ is a suitable positive number with $0\leq \epsilon <1$. Notice that, if algorithm~\eqref{eq:bus_react_upd} is initialized at $q_n(0) \in \mcB_n$, then $q_n(t) \in \mcB$ for all $t=1,2,\dots$; indeed, the new reactive power set point is a convex combination of two numbers in $\mcB_n$.
Algorithm~\eqref{eq:bus_react_upd} is a generalized version of the local scheme proposed in~\cite{Cavraro_2015_TPS}, which, instead of the learned $\phi_n$'s, considers linear functions. The following result characterizes the convergence properties of~\eqref{eq:bus_react_upd}. 

\begin{proposition}\longthmtitle{Asymptotic stability of equilibrium points}\label{prop:convergence}
Let the functions $\phi_n$'s meet conditions $\mathrm{C1)}-\mathrm{C3)}$, and define 
$$
M = \max_{{n \in \mcC}} \left\{ \max_{{v \in \realnumbers}} 
\left |
\frac{d \phi_n}{d v}
\right |
\right\}.
$$
%
%
If the stepsize parameter $\epsilon > 0$ satisfies
\begin{equation}
\label{eq:conv_cond}
\epsilon \leq \min \left\{1,\frac{2}{(1 + \|\bX\| M)} \right\} ,
\end{equation}
then the equilibria of the control rule~\eqref{eq:bus_react_upd} are asymptotically stable.
%
%
Moreover, if $\bq^\sharp$ is an equilibrium point and $\bv^\sharp = \bv(\bq^\sharp)$ is its associated voltage, then $(v^\sharp_n,q^\sharp_n)$ belongs to the graph of $\phi_n$ for every $n\in\mcC$.
\end{proposition}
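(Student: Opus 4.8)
The plan is to collapse the per-node update~\eqref{eq:bus_react_upd} into a single vector map and study its fixed points via the discrete-time linearization (Lyapunov indirect) method. Collecting the controllable nodes, the top block of~\eqref{eq:appx_volt} reads $\bv_C(\bq_C) = \bX\bq_C + \hat{\bv}_C$, where $\hat{\bv}_C$ gathers the first $C$ entries of $\hat{\bv}$, so that stacking~\eqref{eq:bus_react_upd} gives $\bq_C(t+1)=T(\bq_C(t))$ with
\[
T(\bq_C) := (1-\epsilon)\bq_C + \epsilon\,\bphi\!\left(\bX\bq_C + \hat{\bv}_C\right),
\]
where $\bphi:=(\phi_1,\dots,\phi_C)^\top$ acts entrywise. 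Since each $\phi_n$ has range in $\mcB_n$ (C3) and $0\le\epsilon<1$, the map $T$ sends $\prod_{n\in\mcC}\mcB_n$ into itself, so the dynamics are well posed. I would dispatch the second (graph) claim first, as it is purely algebraic: a fixed point $\bq^\sharp=T(\bq^\sharp)$ satisfies $\epsilon\bq^\sharp=\epsilon\,\bphi(\bv^\sharp_C)$ with $\bv^\sharp_C=\bv_C(\bq^\sharp)$, and since $\epsilon>0$ this yields $q_n^\sharp=\phi_n(v_n^\sharp)$ for every $n\in\mcC$, i.e. $(v_n^\sharp,q_n^\sharp)$ lies on the graph of $\phi_n$.

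For asymptotic stability I would linearize $T$ at an equilibrium. By C1 and the chain rule, the Jacobian is
\[
\bJ = (1-\epsilon)\bI - \epsilon\,\bD\bX, \qquad \bD := \diag\!\left(|\phi_1'(v_1^\sharp)|,\dots,|\phi_C'(v_C^\sharp)|\right)\succeq \bZero,
\]
the minus sign in front of $\bD$ coming from C2 ($\phi_n'\le 0$). The key observation, which is what makes the scalar condition~\eqref{eq:conv_cond} clean, is spectral: although $\bD\bX$ is not symmetric, the positive definiteness of $\bX$ lets me write $\bD\bX = \bX^{-1/2}\!\left(\bX^{1/2}\bD\bX^{1/2}\right)\bX^{1/2}$, so $\bD\bX$ is similar to the symmetric positive-semidefinite matrix $\bX^{1/2}\bD\bX^{1/2}$. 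Hence its eigenvalues $\mu_1,\dots,\mu_C$ are real and nonnegative, and the eigenvalues of $\bJ$ are the real numbers $1-\epsilon(1+\mu_i)$.

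It then remains to force these into the open unit interval. The upper bound $1-\epsilon(1+\mu_i)<1$ is immediate from $\epsilon>0$, $\mu_i\ge 0$. For the lower bound I would estimate $\mu_{\max}=\lambda_{\max}(\bX^{1/2}\bD\bX^{1/2})\le \|\bX^{1/2}\|^2\|\bD\| = \|\bX\|\,\|\bD\|\le \|\bX\|M$, using $\|\bD\|\le M$ by definition of $M$; substituting~\eqref{eq:conv_cond} then gives $1-\epsilon(1+\mu_i)\ge 1-\epsilon(1+\|\bX\|M)\ge -1$. Thus the spectrum of $\bJ$ lies in the unit disk, and the linearization theorem for discrete-time systems yields local asymptotic stability of the equilibrium.

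The main obstacle I anticipate is the lower endpoint: the estimate above delivers eigenvalues $\ge -1$, whereas asymptotic stability needs the spectral radius \emph{strictly} below one. I would argue that $\mu_{\max}=\|\bX\|M$ demands simultaneous tightness of $\|\bD\|=M$ and of the submultiplicative bound, i.e. the top eigenvector of $\bX$ must lie in the subspace where every $|\phi_n'|$ saturates at $M$; this is a nongeneric alignment, away from which $\mu_{\max}<\|\bX\|M$ and hence $1-\epsilon(1+\mu_i)>-1$ strictly. At the degenerate boundary one would fall back on a direct Lyapunov estimate for the map $T$ rather than on its linearization to close the argument.
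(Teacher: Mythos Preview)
Your proposal is correct and follows essentially the same route as the paper: vectorize the update, linearize, use the similarity $\bD\bX \sim \bX^{1/2}\bD\bX^{1/2}$ to conclude the eigenvalues of $\bD\bX$ are real and nonnegative, and then bound $\mu_{\max}\le \|\bX\|M$ to place the spectrum of the Jacobian in the unit disk (the paper packages the similarity and eigenvalue bound into a separate lemma, but the content is identical). The boundary issue you flag---that~\eqref{eq:conv_cond} with a nonstrict inequality only yields spectral radius $\le 1$ rather than $<1$---is real and is glossed over in the paper's own proof, which in fact derives the strict condition $\epsilon < 2/(1+\|\bX\|M)$ before asserting~\eqref{eq:conv_cond}.
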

%
%

\begin{proof}
To prove Proposition~\ref{prop:convergence}, it is convenient to express~\eqref{eq:bus_react_upd} in vectorial form as
\begin{align}
    \bq_C(t+1) & = (1-\epsilon)\bq_C(t) + \epsilon \bphi(\bv_C(\bq(t))) \notag \\
    & = \bef(\bq_C(t)),
    \label{eq:vec_bus_react_upd}
\end{align}
where $\bphi: \realnumbers^C \rightarrow [\bq_{\min},\bq_{\max}]$ collects all the $\phi_n$'s, and $\bef$ is the operator
\begin{align*}
&\bef: [\bq_{\min},\bq_{\max}] \rightarrow [\bq_{\min},\bq_{\max}]  \\
& \bq_C \mapsto (1-\epsilon)\bq_C + \epsilon \bphi_C(\bv(\bq)).
\end{align*}
Using the chain rule and equation~\eqref{eq:appx_volt}, the Jacobian of $\bef$ can be expressed as
\begin{equation}
    \bJ_f = (1-\epsilon)\bI + \bJ_{\phi} \bX,
    \label{eq:Jacobian}
\end{equation}
where $\bJ_\phi$ is the Jacobian of $\bphi$ and can be explicitly written as
$$\bJ_\phi = \diag\left(\left\{\frac{d \phi_n(v_n)}{d v_n}\right\}\right).$$
Notice that $\bJ_\phi$ is a diagonal matrix with nonpositive entries, because of property (i). Hence,~\eqref{eq:Jacobian} can be rewritten as
\begin{equation}
    \bJ_f = (1-\epsilon)\bI - |\bJ_{\phi}| \bX.
    \label{eq:Jacobian2}
\end{equation}
Let $(\lambda_i, \bxi_i)$ be an eigenpair for $|\bJ_{\phi}| \bX$. Trivially, $(1 - \epsilon - \epsilon \lambda_i, \bxi_i)$ is an eigenpair for $\bJ_\phi$. Hence, for the asymptotic stability of the equilibrium points of~\eqref{eq:bus_react_upd}, we need to ensure that
$$|1 - \epsilon - \epsilon \lambda_i| < 1 $$
for any eigenvalue $\lambda_i$ of $|\bJ_{\phi}| \bX$. The former can be split into two inequalities. The first yields $\lambda_i > -1$, which is always true since $|\bJ_{\phi}| \bX$ is positive semidefinite. The second instead reads 
$\epsilon (1 + \lambda_i) < 2$ and, using Lemma~\ref{lem:eigvls} (in Appendix), always holds if $\epsilon (1 + \|\bX\| M) < 2$ or, equivalently if
$$\epsilon  < \frac{2}{(1 + \|\bX\| M)}.$$
Further, recall that the algorithm is defined for $0 < \epsilon < 1$. Equation~\eqref{eq:conv_cond} then follows.
Finally, if $\bq^\sharp$ is an equilibrium of~\eqref{eq:bus_react_upd}, by definition from equation~\eqref{eq:vec_bus_react_upd} we have
$$\bq^\sharp = (1-\epsilon)\bq^\sharp + \epsilon \bphi(\bv(\bq^\sharp)) $$
and thus
$$\bq^\sharp = \bphi(\bv(\bq^\sharp))$$
and $(v^\sharp_n,q^\sharp_n)$ belongs to the graph of $\phi_n$ for every $n\in\mcC$.
\end{proof}

\begin{remark}\longthmtitle{Interpretation of the requirements on the learned equilibrium functions}
We explain here the reasons for the requirements $\mathrm{C1)}-\mathrm{C3)}$ on the equilibrium functions $\{\phi_n\}_{n \in \mcC}$. Constraining the range of each $\phi_n$ to $\mcB_n$ ensures that the reactive power set points are always feasible and avoids the use of projections in~\eqref{eq:bus_react_upd}. The continuity, the differentiability, and the monotonicity assumptions are instead used in the proof of Proposition~\ref{prop:convergence}, i.e., these requirements on the learning of the equilibrium functions guarantee  the stability of the closed-loop system. This is done at the cost of potentially increasing the optimality gap.  \oprocend
%
\end{remark}
%
%

\begin{remark}\longthmtitle{Non-incremental vs. incremental control rules}
One could think to update the reactive power using the rule
\begin{equation}
    q_n(t+1) = \phi_n(v_n(t)),
    \label{eq:ni_bus_react_upd}
\end{equation}
where $v_n(t)$ is determined by~\eqref{eq:appx_volt}. Following~\cite{Farivar_2015_SGC}, we refer to algorithms like~\eqref{eq:ni_bus_react_upd} as \emph{non-incremental}, because the new set points are determined based on the local voltage without explicitly exploiting a memory of past set points. These approaches can thus result in large variations in reactive-power set points across time steps. Instead, we refer to algorithms like~\eqref{eq:bus_react_upd} as \emph{incremental} because they compute small (as determined by $\epsilon$) adjustments to the current set points. Current practice is indeed to update the reactive powers using non-incremental algorithms, e.g., see~\cite{Turitsyn_2011_IEEE} or the IEEE Std 1547~\cite{Std1547}. It is trivial to see that equilibrium points of~\eqref{eq:ni_bus_react_upd} belong to the graph of the equilibrium function, too.
The main issue is ensuring the convergence of~\eqref{eq:ni_bus_react_upd}: several works~\cite{Zhou_2021_TAC, Cavraro_2015_TPS} provide conditions that guarantee the stability of non-incremental algorithms, usually expressed as bounds on the voltage function slope.
Actually, one can show that~\eqref{eq:ni_bus_react_upd} converges if 
\begin{equation}
M \leq \frac{1}{\|\bX\|}.
\label{eq:static_conv}
\end{equation}
To use~\eqref{eq:ni_bus_react_upd}, one would then need to additionally enforce~\eqref{eq:static_conv}
%
%
in the learning process described in Section~\ref{sec:learn_eq_function}. The resulting equilibrium function would then provide approximations of the OPF solutions that are worsened because of the additional restriction.
%
%
By contrast, the incremental approach in~\eqref{eq:bus_react_upd} can handle arbitrary finite maximum slopes~$M$ by choosing a suitable stepsize $\epsilon$ that satisfies the condition~\eqref{eq:conv_cond}. \oprocend
\end{remark}

\section{Numerical tests}\label{sec:tests}
%
%

\begin{figure}[t]
\centering	
\includegraphics[width=0.7\columnwidth]{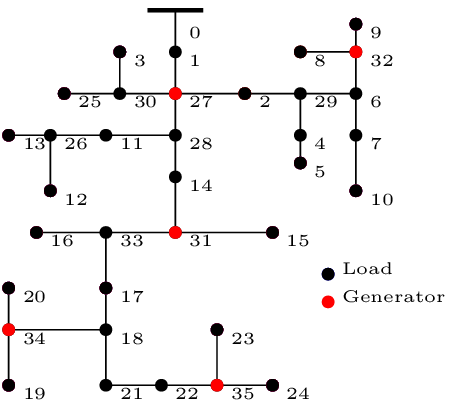}
\caption{The IEEE 37-bus feeder.}
\label{fig:ieee37}
\end{figure}
%
%

We conduct a case study on the IEEE 37-bus feeder upon removing the regulators, incorporating five solar generators, and converting it to its single-phase equivalent, see Fig.~\ref{fig:ieee37}. These five solar generators are the DERs that we intend to control.

\emph{Simulation setup.} We use the Matlab-based OPF solver Matpower~\cite{MATPOWER} to compute both the exact optimal solution of~\eqref{eq:opf} and the solution of the power flow equation. We implement the neural networks using TensorFlow 2.7.0 and conduct the training process in Google
Colab with a single TPU with 32 GB memory. The number of episodes and the number of neurons $H$ are 1000 and 200, respectively. The neural networks are trained with the learning rate set to 0.01 using the Adam optimizer~\cite{DPK-JB:15}.
%

%
%

\emph{Real-world dataset.} The feeder has 25 buses with non-zero load. We extract minute-based load and solar generation data for June 1, 2018, from the Pecan Street dataset~\cite{pecandata}, and the first 75 nonzero load buses from the dataset are aggregated every 3 loads and normalized to obtain 25 load profiles. Similarly, we obtain five solar generation profiles for the active power of DERs.
The normalized load profiles for the 24-hour period are scaled so that 97\% of the total load duration curve coincides with the total nominal load. This scaling results in a peak aggregate load being 1.1 times the total nominal load. We synthesize reactive loads by scaling active demand to match the power factors of the IEEE 37-bus feeder.
The 5 DERs have different generation capabilities, precisely, $\bq_{\max} = [0.4020 \ 0.4020 \ 0.4020 \ 0.0500 \ 0.0500]^\top$ and $\bq_{\min} = -\bq_{\max}$.
Voltage limits are set to $\bv_{\max} = 1.03$ p.u. and $\bv_{\min} = 0.97$ p.u.
Fig.~\ref{fig:solarload} shows the total demand and solar generation across the feeder.
Fig.~\ref{fig:equi_func} plots the learned equilibrium function of DER 31, along with the exact optimal reactive power set points obtained by solving by~\eqref{eq:opf}. 

\begin{figure}[htb]
    \centering
    \includegraphics[scale=0.2]{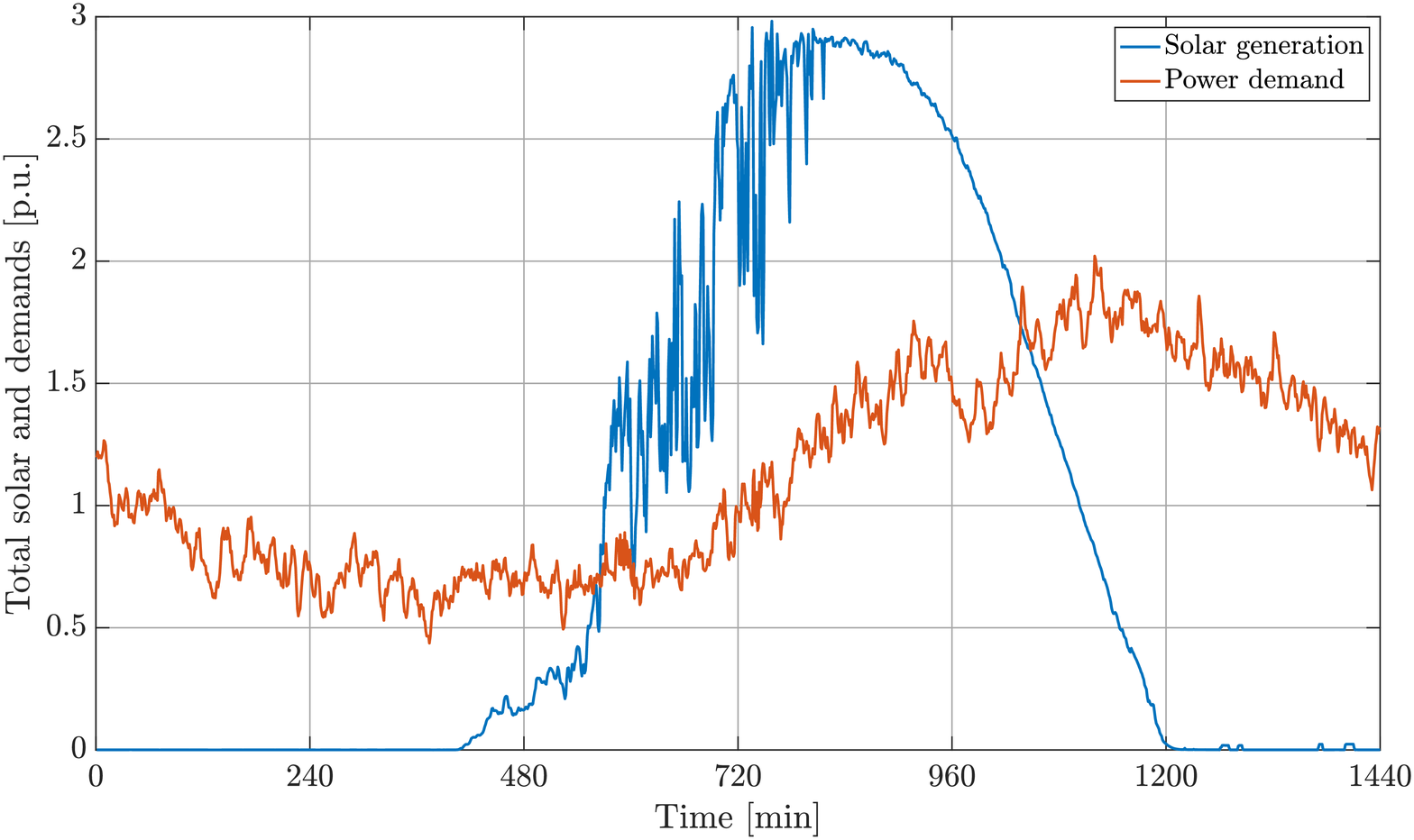}\\
    \caption{Minute-based data for the total (feeder-wise) solar power generation and active power demand.}
    \label{fig:solarload}
\end{figure}

\begin{figure}[htb]
    \centering
    \includegraphics[scale=0.2]{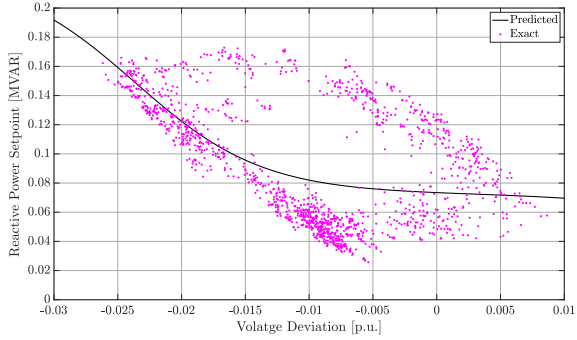}
    \caption{Learned equilibrium function for DER 31 along with the dataset points in purple.}
    \label{fig:equi_func}
\end{figure}
%
%

\emph{Simulation results.} We first verify the stability properties of the local control algorithm~\eqref{eq:bus_react_upd} stated in Proposition~\ref{prop:convergence}. Fig.~\ref{fig:convergence} reports the evolution of the DERs' reactive power injections when loads are fixed. The power trajectories converge to their final value.
Next, we run the control algorithm~\eqref{eq:bus_react_upd} in a scenario where loads are time-varying. Specifically, we obtain loads by randomly perturbing the consumption data used to learn the equilibrium functions. This can be interpreted as having the data from the dataset prescribing a day-ahead forecast, whereas their random perturbation act as the true realization of the load. These loads are minute-based and we consider 120 iterations of~\eqref{eq:bus_react_upd} per minute. We compare the performance of the system when the agents perform~\eqref{eq:bus_react_upd}
%
%
with the one where control actions are not taken.
%
%
Fig.~\ref{fig:volt_dev} reports the minimum voltage deviations, i.e., $\bv - \1$, and Fig.~\ref{fig:cost} the line power losses. In contrast to the uncontrolled case, our approach brings the voltages back to the desired voltage region, and significantly reduces line losses. 

\begin{figure}[htb]
\centering
\includegraphics[scale=0.2]{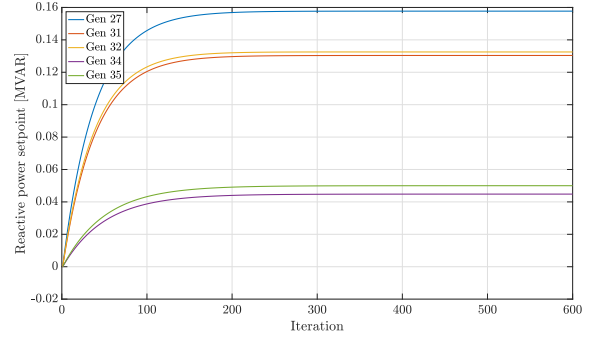}
\caption{The convergence property of the local control schemes, where we use the power data of the 1095-th minute and consider 600 iterations of \eqref{eq:bus_react_upd} with $\epsilon=0.01$.}
\label{fig:convergence}
\end{figure}

\begin{figure}[htb]
\centering
\includegraphics[scale=0.2]{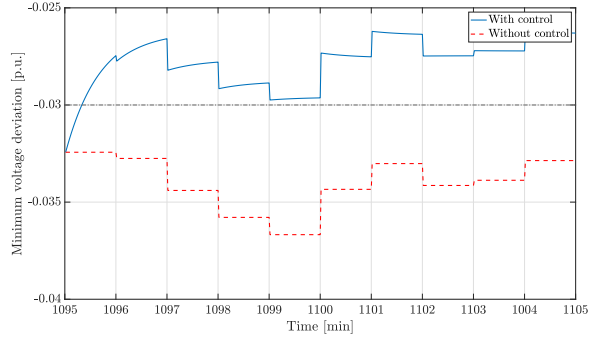}
\caption{Comparison of the minimum voltage deviations between the proposed approach and the uncontrolled case during time period $[1095,1105]$ minutes with 120 iterations of \eqref{eq:bus_react_upd} per minute and $\epsilon=0.01$.}
\label{fig:volt_dev}
\end{figure}

\begin{figure}[htb]
\centering
\includegraphics[scale=0.2]{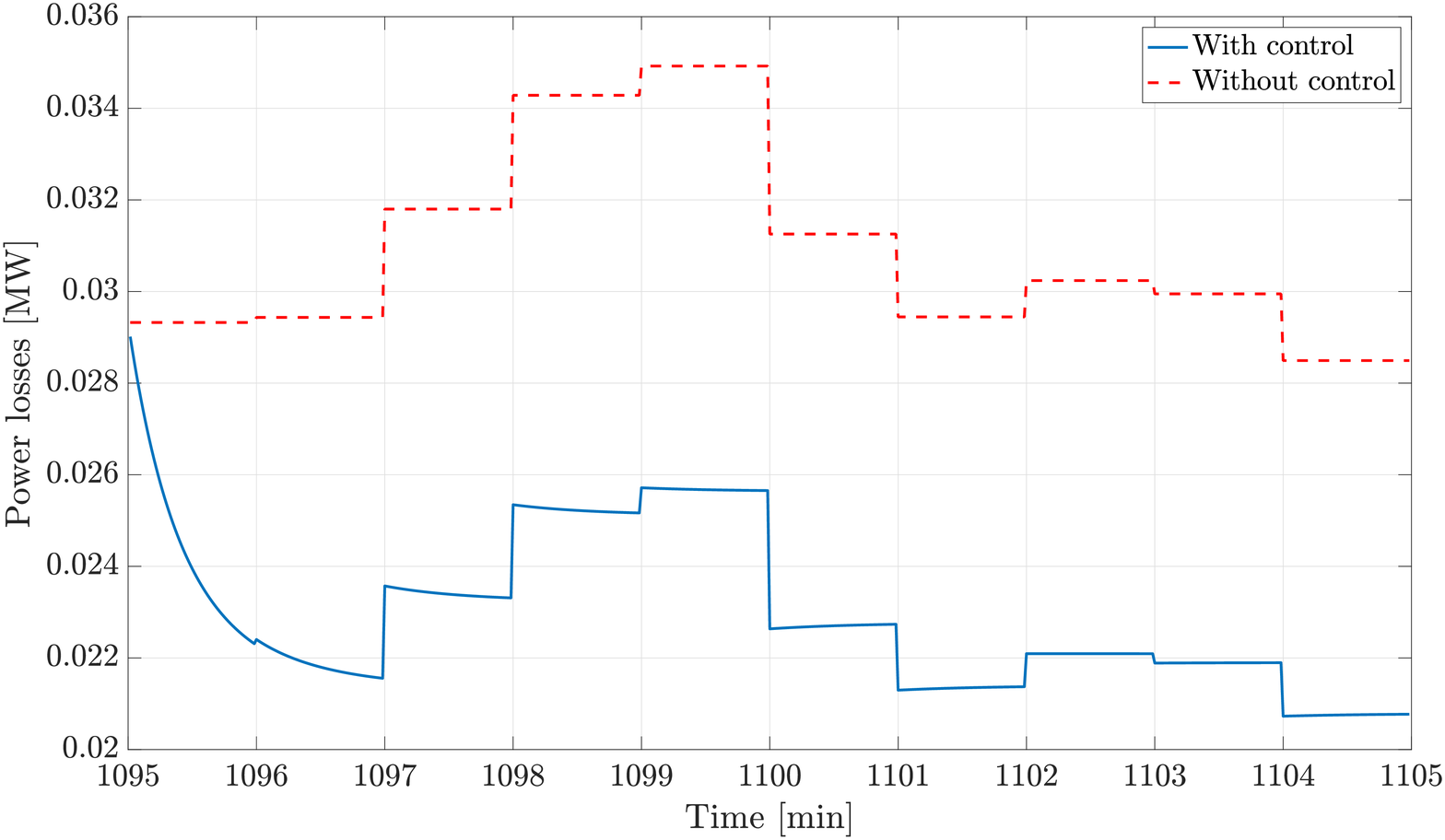}
\caption{Comparison of the power losses between the proposed approach and the uncontrolled case during time period $[1095,1105]$ minutes with 120 iterations of \eqref{eq:bus_react_upd} per minute and $\epsilon=0.01$.}
\label{fig:cost}
\end{figure}

%
%


\section{Conclusions}\label{sec:conc}

We have put forward a two-stage approach to the design of local volt/var control schemes capable of steering DNs toward desirable equilibria. In the first stage, we learn the equilibrium function for each DER bus that, given the local voltage, provides as an output a reactive power set point. Points in the graph of the equilibrium function represent approximations of solutions of an OPF problem. 
We employ a neural network representation that, by design, has the resulting equilibrium function be differentiable, nonincreasing (but without constraints on the slope),  and bounded. In the second stage, we devise an incremental control algorithm whose equilibria belong to the graph of the equilibrium function. The properties of the learned equilibrium maps play a key role in showing that the equilibria are asymptotically stable. Future research directions include reducing the optimality gap,  relaxing the differentiability requirement on the equilibrium maps, and extending the proposed framework to the more general scenario where communication among neighboring agents is allowed.

\appendix 
\begin{lemma}
\label{lem:eigvls}
The matrix $|\bJ_{\phi}| \bX$ is positive semidefinite. Moreover, if $\lambda_{\max}$ is its maximum eigenvalue, it holds
\begin{equation}
    \lambda_{\max} \leq \|\bX\| M.
    \label{eq:max_egvl}
\end{equation}
\end{lemma}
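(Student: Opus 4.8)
The plan is to reduce both claims about the nonsymmetric product $|\bJ_{\phi}|\bX$ to statements about an associated symmetric matrix. Recall from condition $\mathrm{C2)}$ that $\bD := |\bJ_{\phi}|$ is diagonal with nonnegative entries, while $\bX$ is symmetric positive definite (cf.~\eqref{eq:partitionRX}). Since $\bD\bX$ need not be symmetric, ``positive semidefinite'' here is to be read as ``having real, nonnegative spectrum,'' and I would establish exactly this.

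First I would write $\bD = \bD^{1/2}\bD^{1/2}$, which is legitimate because $\bD$ is diagonal and nonnegative, and invoke the standard fact that for any two matrices the products $\bA\bB$ and $\bB\bA$ share the same nonzero eigenvalues. Taking $\bA = \bD^{1/2}$ and $\bB = \bD^{1/2}\bX$ gives $\bA\bB = \bD\bX$ and $\bB\bA = \bD^{1/2}\bX\bD^{1/2}$. The latter is symmetric and positive semidefinite (a congruence of $\bX \succ 0$), so its eigenvalues are real and nonnegative; consequently every nonzero eigenvalue of $\bD\bX$ is real and positive, and any remaining eigenvalues are zero, which proves the positive-semidefiniteness claim. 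The one point needing care is that $\bD$ may be singular (some derivatives $d\phi_n/dv_n$ vanish), which is precisely why I route through the $\bA\bB$/$\bB\bA$ identity rather than a direct similarity transform $\bD^{-1/2}(\bD\bX)\bD^{1/2}$ that would require $\bD$ to be invertible.

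For the bound, since the spectrum of $\bD\bX$ is real and nonnegative, $\lambda_{\max} = \rho(\bD\bX)$, the spectral radius. Submultiplicativity of the induced $2$-norm then yields $\lambda_{\max} \le \|\bD\bX\| \le \|\bD\|\,\|\bX\|$. Finally, $\bD$ is diagonal, so $\|\bD\| = \max_{n\in\mcC} \left| \frac{d\phi_n(v_n)}{dv_n} \right| \le M$ by the definition of $M$, giving $\lambda_{\max} \le \|\bX\|M$ as claimed. Alternatively, one may read the same bound directly off $\bD^{1/2}\bX\bD^{1/2}$, whose largest eigenvalue equals its norm and satisfies $\|\bD^{1/2}\bX\bD^{1/2}\| \le \|\bD^{1/2}\|^2\|\bX\| = \|\bD\|\,\|\bX\|$. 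The only genuinely delicate issue in the whole argument is the singular-$\bD$ case in the first step; the norm estimate itself is then routine.
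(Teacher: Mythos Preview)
Your proof is correct and uses the same symmetrization idea as the paper, but with the roles of the two factors swapped: you take the square root of $\bD = |\bJ_\phi|$ and work with $\bD^{1/2}\bX\bD^{1/2}$, whereas the paper takes the square root of $\bX$ and works with $\bX^{1/2}|\bJ_\phi|\bX^{1/2}$. Because $\bX$ is positive definite, the paper's choice gives an honest similarity $|\bJ_\phi|\bX = \bX^{-1/2}\bigl(\bX^{1/2}|\bJ_\phi|\bX^{1/2}\bigr)\bX^{1/2}$, so the spectra coincide exactly and there is no need for the $\bA\bB$/$\bB\bA$ device you invoke to handle a possibly singular $\bD$; the delicate case you flag simply does not arise if one symmetrizes via $\bX^{1/2}$ instead. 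For the eigenvalue bound, the paper also routes through the symmetrized matrix, writing $\lambda_{\max} = \|\bX^{1/2}|\bJ_\phi|\bX^{1/2}\| \le \|\bX^{1/2}\|^{2}\,\||\bJ_\phi|\| = \|\bX\|M$; your direct route $\lambda_{\max} = \rho(\bD\bX) \le \|\bD\bX\| \le \|\bD\|\,\|\bX\| \le M\|\bX\|$ is equally valid and arguably a touch shorter.
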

\begin{IEEEproof}
First, we show that $\bX$ is a positive definite matrix. Let $(\lambda_i, \bxi_i)$ be an eigenpair for $|\bJ_{\phi}| \bX$. Then, $(\lambda_i, \bX^{\frac 1 2}\bxi_i)$ is an eigenpair for the symmetric positive semidefinite matrix $\bX^{\frac 1 2}|\bJ_{\phi}| \bX^{\frac 1 2}$. Indeed,
\begin{align*}
   \bX^{\frac 1 2}|\bJ_{\phi}| \bX^{\frac 1 2} \bX^{\frac 1 2}\bxi_i &= \bX^{\frac 1 2}|\bJ_{\phi}| \bX \bxi_i 
   = \lambda_i \bX^{\frac 1 2} \bxi_i 
\end{align*}

Hence, $|\bJ_{\phi}| \bX$ is a positive semidefinite matrix, too.
Moreover, using the triangle inequality and because $\bJ_\phi$ is a diagonal matrix, we have that
\begin{align*}
    \lambda_{\max} = & \|\bX^{\frac 1 2}|\bJ_{\phi}| \bX^{\frac 1 2}\| \leq \|\bX\| \|(|\bJ_\phi|)\| \\
    & \leq \|\bX\| \max_{n \in \mcC}\left\{ \max_{v \in \realnumbers} \left\{\frac{d \phi_n(v)}{d v}\right\}\right\} = \|\bX\| M.
\end{align*}

\end{IEEEproof}

\balance
\bibliographystyle{IEEEtran}
\bibliography{myabrv,bibliography}	
\end{document}